\documentclass[letterpaper, 10 pt, conference]{ieeeconf}  

\IEEEoverridecommandlockouts                              

\usepackage{amsfonts, dsfont, mathtools, amsmath, amssymb, bbold, mathrsfs, bm, upgreek} 
\usepackage{graphicx, float, epsfig, color, psfrag, adjustbox, enumerate, svg}
\usepackage[font=small]{caption}
\usepackage{refcount, url, cleveref}
\usepackage[noadjust]{cite}
\usepackage[usenames,dvipsnames,svgnames,table]{xcolor}
\usepackage{soul}
\usepackage[normalem]{ulem} 

\title{\LARGE \bf
On Feedback Speed Control for a Planar Tracking 
}

\author{Xincheng Li, Tengyue Liu, Udit Halder
\thanks{All authors are with the Department of Mechanical and Aerospace Engineering, University of South Florida. 
  Corresponding e-mail:  {\tt\small udithalder@usf.edu}}
}
\def\R{{\mathds{R}}}

\def\0{{\mathbb{0}}}
\def\1{{\mathds{1}}}

\newcommand{\norm}[1]{\left\lVert#1\right\rVert}

\newcommand{\abs}[1]{\left| #1 \right|}

\definecolor{db}{RGB}{23,20,119}
\definecolor{dg}{RGB}{2,101,15}

\newtheorem{proposition}{Proposition}[section]

\newtheorem{remark}{Remark}
\newtheorem{assumption}{Assumption}

\usepackage{upgreek}

\newcommand{\transpose}{\intercal}

\newcommand{\material}[1]{
	\ifthenelse{\equal{#1}{\kappa}}{\upkappa}{
	\ifthenelse{\equal{#1}{\nu}}{\upnu}{
	\ifthenelse{\equal{#1}{\omega}}{\upomega}{
	\ifthenelse{\equal{#1}{\sigma}}{\upsigma}{
	\ifthenelse{\equal{#1}{\theta}}{\uptheta}{
	\mathsf{#1}}}}}}
}

\newcommand{\utangent}{\mathbf{x}}
\newcommand{\unormal}{\mathbf{y}}

\graphicspath{{figures/}}

\begin{document}
\bstctlcite{BSTcontrol} 
\maketitle
\thispagestyle{empty}
\pagestyle{empty}


\begin{abstract}
This paper investigates a planar tracking problem between a leader and follower agent. We propose a novel feedback speed control law, paired with a constant bearing steering strategy, to maintain an abreast formation between the two agents. 
We prove that the proposed control yields asymptotic stability of the closed-loop system when the steering of the leader is known. For the case when the leader's steering is unavailable to the follower, we show that the system is still input-to-state stable with respect to the leader's steering viewed as an input. Furthermore, we demonstrate that if the leader's steering is periodic, the follower will asymptotically converge to a periodic orbit with the same period. We validate these results through numerical simulations and experimental implementations on mobile robots. Finally, we demonstrate the scalability of the proposed approach by extending the two-agent control law to an $N$-agent chain network, illustrating its implications for directional information propagation in biological and engineered flocks.
\end{abstract}

\begin{keywords}
	collective behavior, feedback control, geometric control, multi-agent systems
\end{keywords}

\section{Introduction} \label{sec:intro}

In the ubiquitous displays of collective motion in nature~-- from the flocking of birds to the schooling of fish to the swarming of insects and krills, 
coherent spatial patterns and directional information propagation are often observed~\cite{couzin2002collective, katz2011inferring, attanasi2014information, Murphy2019}.  
Inspired by such phenomena, a large body of work in control theory and robotics has investigated distributed algorithms for flocking and formation control in multi-agent systems~\cite{olfati2006flocking,sepulchre2007stabilization, paley2007oscillator}. 

A widely studied architecture in engineered multi-agent systems is the `leader–follower paradigm', where a leader agent follows a prescribed trajectory while follower agents regulate their motion using feedback in order to maintain desired relative positions or orientations \cite{tang2021formation, trinh2021finite}. Furthermore, two-agent pursuit in falcons, raptors, or dragonflies
~\cite{brighton2017, mischiati2015internal} has inspired a body of work~\cite{justh2006steering,halder2016steering,Zhang2004}, where the dynamics of two-agents are studied from a pure geometric framework, using nonholonomic models of the individuals (e.g., the unicycle model~\cite{lavalle2006planning}) and relative shape to construct feedback control. 
However, in most of these works, the emphasis is placed on \textit{steering control}~\cite{paley2007oscillator, OKeeffe2017}.

In contrast, the role of \textit{speed regulation} in formation maintenance has received comparatively less attention {\cite{consolini2008leader, do2007formation}}. In many biological systems, individuals adjust both their direction and speed in response to the motion of neighbors~\cite{katz2011inferring, Murphy2019}. In addition, empirical studies have reported a consistent relationship between speed and path curvature (closely related to the steering control), often referred to as the {speed--curvature trade-off}, across a range of biological movements~\cite{flash1985coordination, zago2018speed}. These observations suggest that speed modulation plays an important role in maintaining spatial organization in biological collectives.

Motivated by these observations, this paper studies a planar leader–follower tracking problem, with particular focus on the feedback speed control of the follower agent. Each agent is modeled as a planar unicycle, whose speed and steering inputs are considered controls. The follower seeks to track the leader and maintain an ``abreast'' formation. 
One of the novelties of this work, in contrast to the existing approaches to this problem~\cite{consolini2008leader, do2007formation}, lies in the study of \textit{leader independence}, where the follower does not have perfect knowledge of the leader's instantaneous control inputs. 
%

\smallskip
The specific contributions of this paper are as follows.

(1) 
A major contribution of this work is the proposal of a feedback speed control law for the follower 
accompanied by a \textit{constant bearing} steering control law~\cite{galloway2013symmetry}. 
The two controls are {decoupled} using a singular perturbation argument, allowing for more emphasis to be placed on the speed control.
We show that the proposed controls (assuming full knowledge of leader's steering) asymptotically stabilize the closed-loop system at the desired abreast formation (Prop.~\ref{prop:1}). 

(2) More realistically, when the leader's steering is not known to the follower, 
we prove that the closed-loop system is input-to-state stable~\cite{khalil2002nonlinear} with respect to the leader's steering input (Prop.~\ref{prop:2}). Furthermore, we also analyze a special case where the leader's steering is periodic, and show that the closed-loop system asymptotically converges to a periodic orbit with the same period (Prop.~\ref{prop:3}). 

(3) We validate the theoretical results using both numerical simulations and robotic implementations. 

(4) Finally, we provide a plausible passage to a multi-agent formation control and illustrate its relation to information transfer in a flock.   

\smallskip
The remainder of this paper is organized as follows. The modeling and problem formulation are given in Sec.~\ref{sec:problem}, control design and its analysis are provided in Sec.~\ref{sec:control}, followed by the numerical and experimental results in Sec.~\ref{sec:results}. An extension to an $N$ agent chain network is discussed in Sec.~\ref{sec:nagents} and the paper is concluded in Sec.~\ref{sec:conclusion}.
\section{Problem Formulation} \label{sec:problem}
\subsection{Agent Model}
We consider two agents moving in the plane ($\R^2$). Let $\mathbf{r}_i \in \R^2$ denote the position vector of agent $i$ ($i=1,2$). To describe the orientation, we attach a moving frame to each agent, denoted by a unit tangent vector $\utangent_i$ (defining the heading angle of the agent) and a unit normal vector $\unormal_i$ (see Fig.~\ref{fig:formation}). The dynamical equations for the trajectories of the agents are then given by \cite{Bishop1975}:
\begin{align} \label{eq:agent_kinematics}
        \dot{\mathbf{r}}_i = v_i \utangent_i, ~~
        \dot{\utangent}_i = u_i \unormal_i, ~~
        \dot{\unormal}_i = -u_i \utangent_i 
\end{align}
where $v_i$ is the linear speed and $u_i$ is the angular speed. We consider the variables $v_i$ and $u_i$ as speed and steering control inputs, respectively. We designate agent 1 as the \textit{leader} with some given open-loop program $v_1, u_1$, and agent 2 as the \textit{follower} (with corresponding controls $v_2, u_2$). Here, we make the following mild assumptions on the bounds of the leader's controls. 
\begin{assumption} \label{assumption:bounds}
Assume that $ v_\epsilon \leq v_1 (t) \leq k_v$ and $\abs{u_1(t)} \leq k_u$, for some bounds $k_v, k_u, v_\epsilon > 0$ and for all $t\geq 0$.
\end{assumption}

\begin{figure}[t]
    \centering
    \includegraphics[width=\linewidth]{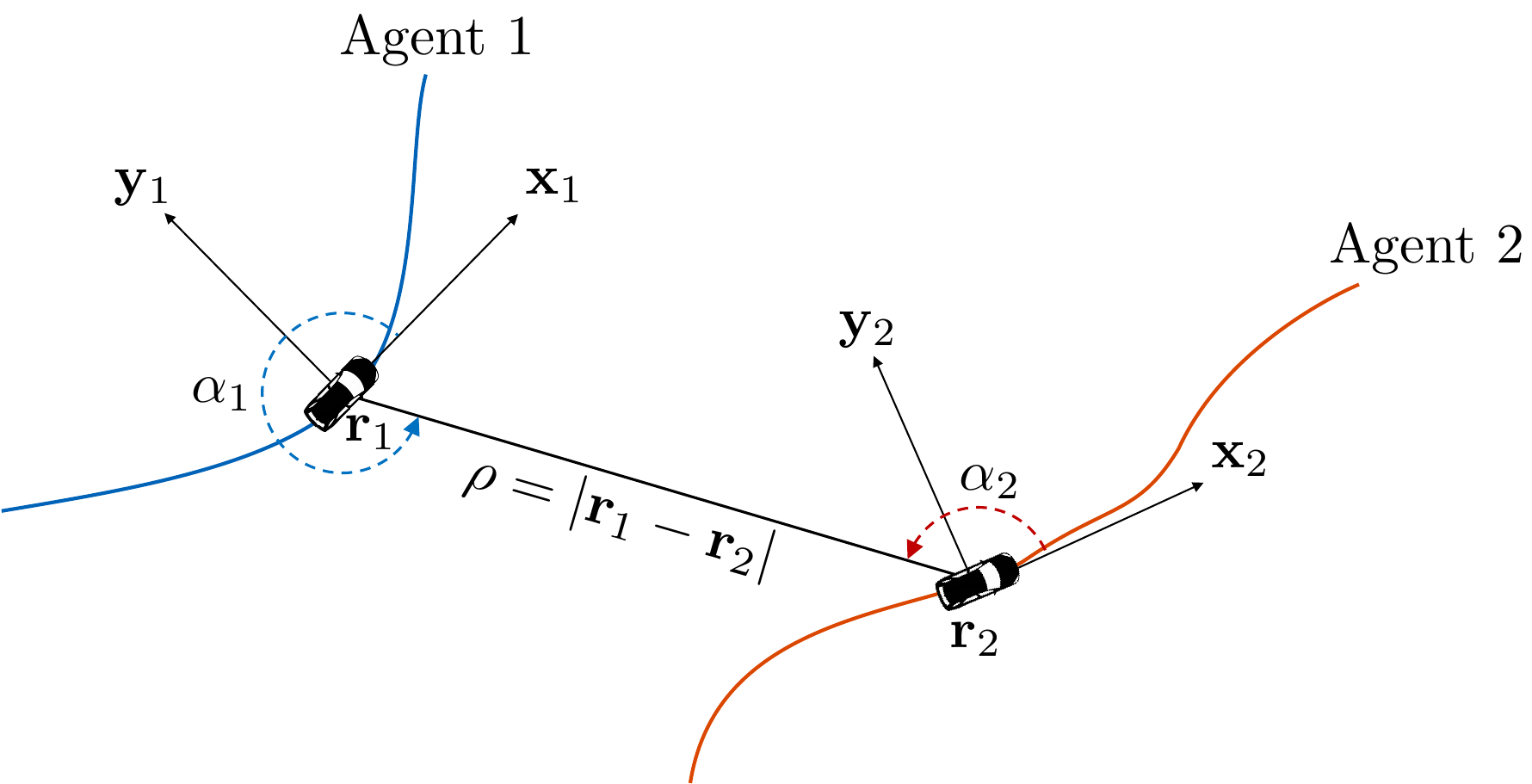}
    \caption{Modeling the motion and formation of two agents using their relative position and angles.}
    \label{fig:formation}
\end{figure}

\subsection{Shape Dynamics}

To analyze the formation independently of the global frame, three shape variables are introduced: the relative distance $\rho$ and the relative bearing angles $\alpha_1, \alpha_2$ (see Fig.~\ref{fig:formation}). These variables are defined as follows. 

Let $\mathbf{r}_{ij} := \mathbf{r}_i - \mathbf{r}_j, ~ i,j \in \{1,2\}$ be the relative position vector of agent $i$ with respect to agent $j$. Then, the shape variables $\rho, \alpha_1, \alpha_2$ are defined as
\begin{align}
    \begin{split}
        \rho := \norm{\mathbf{r}_{12}} &= \norm{\mathbf{r}_{21}} \\
        R(\alpha_1)\mathbf{x}_1 := \frac{\mathbf{r}_{21}}{\rho}, \qquad& 
        R(\alpha_2)\mathbf{x}_2 := \frac{\mathbf{r}_{12}}{\rho} 
    \end{split}
\end{align}
where $R(\theta) \in SO(2)$ is a rotation matrix that rotates any planar vector counterclockwise by an angle $\theta$.

Differentiating these definitions yields the shape dynamics \cite{halder2016steering, galloway2013symmetry}
\begin{align} \label{eq:shape_dynamics}
    \begin{split}
        \dot{\rho} &= -v_1 \cos \alpha_1 - v_2 \cos \alpha_2 \\
        \dot{\alpha}_1 &= - u_1 + \frac{1}{\rho} (v_1 \sin \alpha_1 + v_2 \sin \alpha_2)  \\
        \dot{\alpha}_2 &= - u_2 + \frac{1}{\rho} (v_1 \sin \alpha_1 + v_2 \sin \alpha_2) 
    \end{split}
\end{align}
These equations describe the evolution of the formation geometry entirely in terms of relative quantities and control inputs.

\subsection{Control Problem} \label{sec:control_problem}
The control objective is to stabilize the follower to a desired formation defined by a fixed distance $\rho_0$ and two given relative angles. In particular, we seek control laws $(v_2, u_2)$ that asymptotically stabilize the shape variables to an ``abreast" formation, where $\alpha_1 = \pm\frac{\pi}{2}$ and $\alpha_2 = \mp \frac{\pi}{2}$. 
In other words, we look to construct a formation where the two agents become Bertrand mates \cite{Bertrand1850, Zhang2004}.

The signs of $\alpha_1$ and $\alpha_2$ depend on the relative positions of the agents. For example, if the follower is to the right of the leader (as in Fig.~\ref{fig:formation}), $\alpha_1 = - \frac{\pi}{2}, \alpha_2 = + \frac{\pi}{2}$, in the abreast formation. For the purposes of this paper, we will follow this convention. Indeed, the signs need to be flipped for the other case, and can be dealt with analogously. 

Formally, we wish to stabilize the shape dynamics \eqref{eq:shape_dynamics} around the equilibrium point $z^* = \left(\rho_0, -\frac{\pi}{2}, \frac{\pi}{2} \right)$.

\vspace*{-0pt}
\section{Control Design}\label{sec:control}

At the outset, let the steering control $u_2$ be chosen as
\begin{align}
    u_2 &= -\mu_2 \cos\alpha_2 + \frac{1}{\rho}(v_1 \sin \alpha_1 + v_2 \sin \alpha_2)
    \label{eq:u2_CB}
\end{align}
where $\mu_2 > 0$ is a control gain. This steering control law is a standard \textit{constant bearing} (CB) strategy \cite{galloway2013symmetry}, a concept rooted in classical guidance and navigation. Unlike \textit{classical pursuit} \cite{Davis1962}, where the follower (pursuer) points directly at the leader (target) ($\alpha_2 \rightarrow 0$), CB control maintains a fixed non-zero bearing angle (here $ \alpha_2 \rightarrow \frac{\pi}{2}$ as argued below).

Under the steering control~\eqref{eq:u2_CB}, the $\alpha_2$ equation reads as $\dot{\alpha}_2 = \mu_2 \cos \alpha_2$. It is a straightforward exercise to show that $\sin \alpha_2 (t) = \tanh (\mu_2 t + \tanh^{-1}(\sin \alpha_2 (0)))$. Therefore, for all initial conditions $\alpha_2 (0) \neq - \frac{\pi}{2}$,~$\sin \alpha_2 (t) \rightarrow 1$  (i.e., $\alpha_2 (t) \rightarrow \tfrac{\pi}{2}$) as $t \rightarrow \infty$. Moreover, owing to the nature of the $\tanh (\cdot)$ function, this convergence is exponentially fast, with higher gain $\mu_2$ resulting in faster convergence.

Therefore, it is natural to restrict
the shape dynamics~\eqref{eq:shape_dynamics} to the $\alpha_2 = \tfrac{\pi}{2}$ manifold, which yields
\begin{align} \label{eq:shape_dynamics_reduced}
    \begin{split}
        \dot{\rho} &= -v_1 \cos \alpha_1\\
        \dot{\alpha}_1 &= - u_1 + \frac{1}{\rho} (v_1 \sin \alpha_1 + v_2) 
    \end{split}
\end{align}
\begin{remark} {\bf Singular perturbation and time-scale separation}. The restriction of the fully coupled system to the $\alpha_2 = \tfrac{\pi}{2}$ manifold is formally justified via singular perturbation theory \cite[Chapter 11]{khalil2002nonlinear}. This decoupling requires the assumption that the follower's steering gain $\mu_2$ is chosen to be sufficiently large (compared to the speed gain, yet to be defined). 
Under this setting, we can define a small perturbation parameter $\epsilon = \tfrac{1}{\mu_2} > 0$, creating a strict two-time-scale separation. The $\alpha_2(t)$ dynamics act as the fast boundary layer, converging exponentially to the manifold $\alpha_2 = \tfrac{\pi}{2}$, which allows the remaining shape variables $(\rho, \alpha_1)$ to be rigorously analyzed as the reduced slow system.
\end{remark}

It remains to specify the feedback control $v_2$ to stabilize the distance to a given value $\rho_0$ and the angle $\alpha_1$ to $-\tfrac{\pi}{2}$. For this purpose, we introduce a function $g : (0, \infty) \rightarrow \R$ and its derivative $f: (0, \infty) \rightarrow \R$ as follows:
\begin{itemize}
    \item $g(\cdot)$ is $C^2$ and positive definite, i.e., $g(\rho) > 0, \forall \rho \neq \rho_0$, $g(\rho_0) = 0$,
    \item $\lim_{\rho \to 0} g(\rho) = \infty$ and $\lim_{\rho \to \infty} g(\rho) = \infty$,
    \item $\frac{d}{d\rho}g(\rho) = f(\rho)$, where $f(\cdot)$ is $C^1$. 
\end{itemize}
We are now ready to state the following result.
 \begin{proposition} \label{prop:1}
Consider the reduced shape dynamics (\ref{eq:shape_dynamics_reduced}). Then, the  feedback speed control given by
\begin{align}
    v_2 &= -v_1 \sin \alpha_1 + \rho \left( u_1 - \mu_1 \cos\alpha_1 + v_1 f(\rho) \right) \label{eq:v2_ideal}
\end{align}
with gain $\mu_1> 0$, renders the equilibrium $(\rho_0, -\tfrac{\pi}{2})$ (locally) asymptotically stable.
\end{proposition}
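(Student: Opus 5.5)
Substituting \eqref{eq:v2_ideal} into \eqref{eq:shape_dynamics_reduced} cancels the $\sin\alpha_1$ and $u_1$ terms. The closed loop becomes
\begin{align*}
\dot\rho &= -v_1\cos\alpha_1, \\
\dot\alpha_1 &= -\mu_1\cos\alpha_1 + v_1 f(\rho).
\end{align*}
This system is non-autonomous only through $v_1(t)$, and by Assumption~\ref{assumption:bounds} we have $v_\epsilon\le v_1\le k_v$. The plan is a Lyapunov argument with the candidate
\begin{align*}
V(\rho,\alpha_1) = g(\rho) + 1 + \sin\alpha_1 ,
\end{align*}
which is time-independent. By the assumptions on $g$, $V\ge 0$ near $(\rho_0,-\tfrac{\pi}{2})$ and vanishes only there. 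Here $g$ is positive definite about $\rho_0$, and $1+\sin\alpha_1$ is positive definite about $-\tfrac{\pi}{2}$ on the interval $(-\tfrac{3\pi}{2},\tfrac{\pi}{2})$.

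Differentiating along the closed loop gives
\begin{align*}
\dot V = f(\rho)\dot\rho + \cos\alpha_1\dot\alpha_1
= -v_1 f(\rho)\cos\alpha_1 - \mu_1\cos^2\alpha_1 + v_1 f(\rho)\cos\alpha_1
= -\mu_1\cos^2\alpha_1 \le 0 .
\end{align*}
The cross terms cancel exactly, which is presumably why the $v_1 f(\rho)$ term was built into \eqref{eq:v2_ideal}. This already gives uniform stability. Since $V$ is nonincreasing, sublevel sets of $V$ are forward invariant, and the blow-up of $g$ at $0$ and $\infty$ keeps $\rho$ bounded away from $0$ and $\infty$.

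The main obstacle is that $\dot V$ is only negative semidefinite and the system is time-varying, so LaSalle's principle does not apply directly. I would handle this in one of two ways.

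\emph{First option: a LaSalle-type argument for time-varying systems.} Use Barbalat's lemma, or Matrosov's theorem with the bounds on $v_1$. Trajectories are bounded and $\ddot V$ is bounded because $v_1$ and the states are bounded, so $\cos\alpha_1(t)\to 0$. Then $\dot\alpha_1 = -\mu_1\cos\alpha_1 + v_1 f(\rho)$ must also tend to zero in an averaged sense; this step uses uniform continuity and requires $v_1$ to be uniformly continuous, or an integral argument. Combined with $v_1\ge v_\epsilon>0$, this forces $f(\rho)\to 0$. Near $\rho_0$ we have $f(\rho)=0$ only at $\rho=\rho_0$, assuming $\rho_0$ is a nondegenerate minimum, i.e.\ $g''(\rho_0)>0$. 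Hence $\rho\to\rho_0$.

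\emph{Second option, which avoids the regularity issue: linearize.} With $\delta\rho=\rho-\rho_0$ and $\delta\alpha=\alpha_1+\tfrac{\pi}{2}$, we have $\cos\alpha_1\approx\delta\alpha$, which gives
\begin{align*}
\dot{\delta\rho} = -v_1\,\delta\alpha,\qquad \dot{\delta\alpha} = -\mu_1\delta\alpha + v_1 g''(\rho_0)\,\delta\rho .
\end{align*}
This is a damped oscillator with a time-varying but bounded, positive coefficient. To get a strict Lyapunov function, I would add a small cross term to $V$:
\begin{align*}
W = V - c\, f(\rho)\cos\alpha_1 ,
\end{align*}
with $c>0$ small. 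Its derivative contributes $-c\,v_1 f(\rho)^2$ plus terms that are higher order or dominated by $\mu_1\cos^2\alpha_1$ and $c\,v_1 f(\rho)^2$ via Young's inequality. A term involving $\dot v_1$ does not appear, since $v_1$ enters only through the vector field. This yields $\dot W\le -\kappa(\cos^2\alpha_1 + f(\rho)^2)$ locally, with $W$ positive definite, which gives uniform local (in fact exponential) asymptotic stability.

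I would commit to the strict-Lyapunov construction with $W$. Checking that the cross-term derivative is dominated, uniformly in $v_1\in[v_\epsilon,k_v]$, is the step I expect to need the most care.
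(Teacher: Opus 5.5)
Your proposal is correct. It starts exactly where the paper does, with the same closed loop, the same $V=1+\sin\alpha_1+g(\rho)$, and the same cancellation giving $\dot V=-\mu_1\cos^2\alpha_1$. The two routes split at the semidefinite derivative.

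The paper applies LaSalle's invariance principle (Khalil, Thm.~4.4) on the set $V<2$. On $\{\alpha_1=-\tfrac{\pi}{2}\}$ one has $\dot\alpha_1=v_1f(\rho)$, so invariance forces $\rho=\rho_0$. That is short, but Thm.~4.4 is stated for autonomous systems, so strictly it covers only constant $v_1$. Under Assumption~\ref{assumption:bounds} the closed loop is time-varying through $v_1(t)$, which is precisely the issue you raise.

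Your strict function $W=V-c\,f(\rho)\cos\alpha_1$ resolves this.
\begin{itemize}
\item The terms $c\,v_1f'(\rho)\cos^2\alpha_1$ and $-c\mu_1 f\sin\alpha_1\cos\alpha_1$ are $O(c)$ and are absorbed via Young's inequality.
\item The term $c\,v_1f^2\sin\alpha_1\le-\tfrac{c}{2}v_\epsilon f^2$ once $\sin\alpha_1\le-\tfrac12$.
\item $W$ is locally positive definite when $c^2g''(\rho_0)<1$.
\end{itemize}
You therefore get uniform local exponential stability for any $v_1(t)\in[v_\epsilon,k_v]$, together with explicit rates.

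$W$ is also what the paper's Prop.~\ref{prop:2} really needs. Khalil's Thm.~4.19 requires $\dot V$ to be bounded by a negative definite function outside a ball, and $-\mu_1\cos^2\alpha_1-u_1\cos\alpha_1$ does not provide that. In $\dot W$, by contrast, the extra term $-(\cos\alpha_1+c f\sin\alpha_1)u_1$ is bounded by a constant times $\sqrt{\cos^2\alpha_1+f^2}\,|u_1|$, which gives local ISS directly.

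The cost of your route is the hypothesis $g''(\rho_0)>0$. This is a clarification rather than an added restriction, for two reasons.
\begin{itemize}
\item The paper's step ``$f(\rho)=0$, i.e.\ $\rho=\rho_0$'' already needs $\rho_0$ to be an isolated zero of $f$, and positive definiteness of $g$ alone does not guarantee this. For example, $g=(\rho-\rho_0)^6\bigl(2+\sin\tfrac{1}{\rho-\rho_0}\bigr)$ near $\rho_0$ is $C^2$ and positive definite, yet it produces equilibria $(\rho^*,-\tfrac{\pi}{2})$ accumulating at $(\rho_0,-\tfrac{\pi}{2})$.
\item Prop.~\ref{prop:3} uses $f'(\rho_0)>0$ explicitly.
\end{itemize}
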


\begin{proof}
Restricting the dynamics to the $\alpha_2 = \frac{\pi}{2}$ manifold, the reduced subsystem $z_1 = (\rho, \alpha_1)$ under the speed control (\ref{eq:v2_ideal}) becomes
    $\dot{z}_1 =
    \begin{bmatrix}
        -v_1\cos\alpha_1 \\
        -\mu_1\cos\alpha_1 + v_1 f(\rho)
    \end{bmatrix}$.
Consider the positive definite Lyapunov candidate $V_1 = 1+\sin\alpha_1 + g(\rho)$, which is zero only at the equilibrium $z_1^* = (\rho_0, - \frac{\pi}{2})$. 
Its time derivative $\dot{V}_1 = -\mu_1 \cos^2\alpha_1 \leq 0$ is negative semi-definite, with the function equal to zero on the set $\{(\rho, \alpha_1) \, | \, \alpha_1 = -\frac{\pi}{2}\}$, assuming we are in a neighborhood of the equilibrium point such that $V_1(z_1) < 2$. On this set, $\dot{\alpha}_1 = 0$ implies $f(\rho) = 0$, i.e., $\rho = \rho_0$. So, the equilibrium point $z_1^*= (\rho_0, -\frac{\pi}{2})$ is the largest invariant set. By LaSalle's invariance principle~\cite[Theorem 4.4]{khalil2002nonlinear}, the system is locally asymptotically stable around the equilibrium point $z_1^*$, with the region of attraction the neighborhood such that~$V_1(z_1) < 2$.
\end{proof}

\smallskip
\begin{remark} {\bf Importance of speed feedback.} 
Speed regulation is important and, in fact, necessary to achieve the desired tracking. In \cite{Zhang2004}, for instance, a similar two-agent approach is used to achieve boundary following. In that work, however, the follower maintains a constant speed, and the leader is a `virtual shadow point' which moves along the boundary at variable speeds determined by the boundary's geometry and the follower's motion (see~\cite[Eq.~(15)]{Zhang2004}). 
This paper in contrast proposes an explicit follower speed control which regulates both the bearing $\alpha_1$ and the distance $\rho$. 
\end{remark}

\smallskip
\subsection{Leader Independence}

In biological flocking or decentralized swarms, agents typically only have reliable information on their own speed and turning rate, alongside relative geometry (such as distance) of neighbors \cite{katz2011inferring, Murphy2019}. However, our feedback control laws~\eqref{eq:u2_CB}, \eqref{eq:v2_ideal} rely on the follower agent possessing perfect knowledge of all the shape variables -- $\rho, \alpha_1, \alpha_2$, the leader's instantaneous steering $u_1$ and linear speed $v_1$. 

In practical scenarios, we may only assume that the follower knows its own speed and steering ($v_2$ and $u_2$) and can obtain estimates of the relative shape variables $\rho$ and $\alpha_2$ as well as their derivatives $\dot\rho$ and $\dot\alpha_2$ through onboard sensors such as cameras or LiDAR. Using these measurements, we can calculate 
\begin{align*}
    X = - \dot\rho - v_2 \cos\alpha_2, \quad   Y = \rho(\dot\alpha_2 + u_2) - v_2\sin\alpha_2
\end{align*}
The leader's linear speed and relative bearings ($v_1$, $\alpha_1$) can then be estimated as 
\begin{equation}
    v_1 = \sqrt{X^2 + Y^2}, \;\;\; \alpha_1 = \arctan \left({Y}/{X} \right) 
    \label{eq:estimates}
\end{equation}
This estimation problem, when also accounting for sensor measurement errors and noise, leads to a nonlinear filtering problem. Assuming that these errors are bounded, we detail how these estimation errors affect the stability of the feedback control in Appendix. \ref{appdx:estimation_analysis}.

As it can be observed from \eqref{eq:estimates}, obtaining estimates of $v_1$ and $\alpha_1$ involves taking first derivatives of sensory data. However, estimating $u_1$ will require second derivatives, a process that is prone to errors. This provides motivations for considering a leader steering-independent approach where the follower does not have access to $u_1$, while still attempting to maintain stability. 



The speed control law~\eqref{eq:v2_ideal} without the $u_1$ term reads
\begin{equation}
    v_2 = -v_1 \sin \alpha_1 + \rho \left( -\mu_1 \cos\alpha_1 + v_1 f(\rho) \right) \label{eq:v2_robust}
\end{equation}
This yields a closed-loop system which may be regarded as a non-autonomous control system with the leader steering $u_1(t)$ acting as a perturbation input. We can then state the following result on the stability of such a system.

\begin{figure*}[t]
	\centering
	\includegraphics[width=1\textwidth]{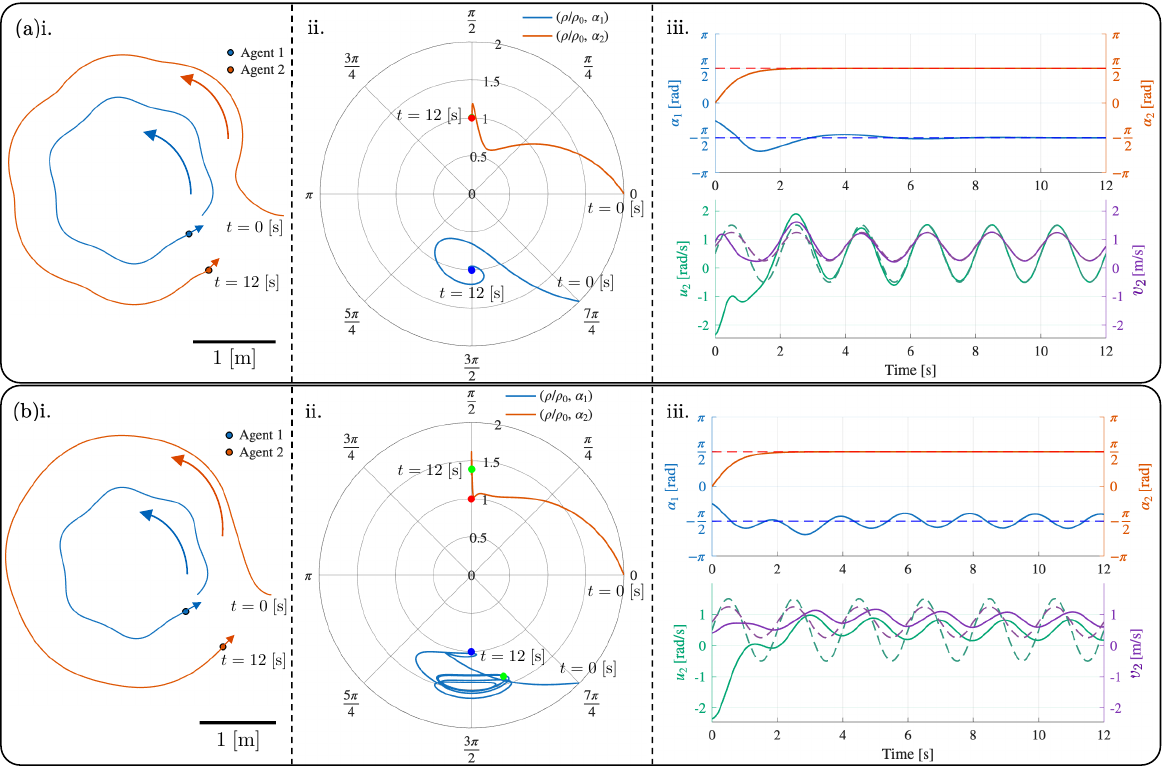}
	\caption{Formation control of two agents. Agent two (the follower in orange) is tasked with becoming a Bertrand mate moving to the right of agent one (the leader in blue). Agent two employs the controls laws as of Prop.~\ref{prop:1} in (a) and of Prop.~\ref{prop:2} in (b). The resulting trajectories are shown in the panels (i). Panels (ii) show the polar plots of the normalized $\rho$ and $\alpha_i, \;i= 1, 2$, and either convergence to the equilibrium $z^*$ (represented by the solid red and blue dots) in (a) or a periodic orbit in (b), with the green dot in (b)ii. showing the shape coordinate at $t = 12$ [s]. Panels (iii) show the convergence (a) or periodic oscillations (b) of $\alpha_1$ in the top subplot ($\alpha_2$ converges in both cases), and the controls of the follower agent in the bottom subplot. The dashed lines represent the corresponding desired values.}
	\label{fig:circular}
\end{figure*}

\begin{proposition} \label{prop:2}
The modified speed control law \eqref{eq:v2_robust} renders the closed-loop system \eqref{eq:shape_dynamics_reduced} input-to-state stable (ISS) with respect to the leader's steering $u_1$.
\end{proposition}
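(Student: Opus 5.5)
With the modified law \eqref{eq:v2_robust} substituted into the reduced dynamics \eqref{eq:shape_dynamics_reduced}, the closed loop becomes
\begin{align*}
\dot\rho = -v_1\cos\alpha_1, \qquad \dot\alpha_1 = -\mu_1\cos\alpha_1 + v_1 f(\rho) - u_1 ,
\end{align*}
so the leader's steering enters additively in the $\alpha_1$ equation. It is the only difference from the nominal closed loop of Prop.~\ref{prop:1}. I will therefore reuse the Lyapunov function $V_1 = 1+\sin\alpha_1 + g(\rho)$ of Prop.~\ref{prop:1} and aim for an ISS-Lyapunov inequality in the sense of \cite[Theorem 4.19]{khalil2002nonlinear}. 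The inequality is to hold locally, on the sublevel set $V_1<2$ and for $\abs{u_1}\le k_u$ small enough, which is the natural local ISS counterpart of Prop.~\ref{prop:1}. Along trajectories,
\begin{align*}
\dot V_1 = -\mu_1\cos^2\alpha_1 - u_1\cos\alpha_1 \le -\tfrac{\mu_1}{2}\cos^2\alpha_1 + \tfrac{1}{2\mu_1}u_1^2 ,
\end{align*}
by Young's inequality. Since $V_1$ is positive definite and radially bounded near $z_1^*$ (it is bounded above and below by class-$\mathcal{K}$ functions of $\abs{z_1 - z_1^*}$), this alone already gives boundedness of trajectories for bounded $u_1$. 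However, it is not a strict ISS-Lyapunov inequality, because $\dot V_1$ is only negative semidefinite: it is silent in the $\rho$ direction.

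The main obstacle is exactly this lack of strictness. The statement in Prop.~\ref{prop:1} was obtained through LaSalle, which has no direct ISS analogue. I see two routes.

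\emph{Route 1: linearisation.} Linearise the $u_1=0$ closed loop at $z_1^*=(\rho_0,-\tfrac{\pi}{2})$. With $e=\rho-\rho_0$ and $\beta = \alpha_1+\tfrac{\pi}{2}$, we have $\cos\alpha_1\approx\beta$, which gives
\begin{align*}
\dot e = -v_1\beta, \qquad \dot\beta = -\mu_1\beta + v_1 f'(\rho_0)\, e .
\end{align*}
Assume $f'(\rho_0)=g''(\rho_0)>0$, i.e.\ a nondegenerate minimum of $g$. Because $v_1\ge v_\epsilon>0$ by Assumption~\ref{assumption:bounds}, this time-varying linear system is uniformly exponentially stable. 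Then a converse Lyapunov theorem \cite[Theorem 4.14]{khalil2002nonlinear} gives local exponential stability of the nonlinear closed loop. Since the input enters globally Lipschitz and affinely, local ISS follows by \cite[Lemma 4.6]{khalil2002nonlinear}.

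\emph{Route 2: explicit strict function.} For an explicit strict Lyapunov function, I would add a cross term to $V_1$:
\begin{align*}
W = V_1 - \varepsilon\,\cos\alpha_1\, f(\rho)
\end{align*}
for a small $\varepsilon>0$. Its derivative produces the term $-\varepsilon v_1 f(\rho)^2\sin\alpha_1\cdot(\ldots)$. Near $\alpha_1=-\tfrac{\pi}{2}$ this behaves like $-\varepsilon v_1 f(\rho)^2$, using $v_1\ge v_\epsilon$. The remaining cross terms are absorbed by $\mu_1\cos^2\alpha_1$ when $\varepsilon$ is small. This yields
\begin{align*}
\dot W \le -c\bigl(\cos^2\alpha_1 + f(\rho)^2\bigr) + \gamma\,\abs{u_1}
\end{align*}
for some $c>0$, which is an ISS-Lyapunov inequality. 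The step I expect to need care is controlling the $\dot v_1$ term that appears in $\dot W$ if $v_1$ is placed inside $W$; that is precisely why the cross term is chosen without $v_1$. The price is handling sign-indefinite terms like $v_1 f'(\rho)\cos^2\alpha_1$, which are again dominated by $\mu_1\cos^2\alpha_1$ locally. I would try Route 2 first and fall back on Route 1 if the cross-term bookkeeping fails.

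Finally, conclude by \cite[Theorem 4.19]{khalil2002nonlinear} that there exist $\beta\in\mathcal{KL}$ and $\gamma\in\mathcal{K}$ with
\begin{align*}
\abs{z_1(t)-z_1^*}\le \beta\bigl(\abs{z_1(0)-z_1^*},t\bigr)+\gamma\Bigl(\sup_{s\le t}\abs{u_1(s)}\Bigr).
\end{align*}
This holds for initial conditions in $V_1<2$ and for $k_u$ small enough that trajectories remain in that set.
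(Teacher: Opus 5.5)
Your proposal is correct in substance, but it takes a different and more careful route than the paper. The paper keeps $V_1$ unchanged and notes that $\dot V_1=-\mu_1\cos^2\alpha_1-u_1\cos\alpha_1\le 0$ whenever $\abs{\cos\alpha_1}\ge k_u/\mu_1$. It then reuses the LaSalle argument of Prop.~\ref{prop:1} for the largest invariant set and cites Khalil's Theorem~4.19 directly. That theorem, however, requires $\dot V\le -W_3(z_1-z_1^*)$ with $W_3$ positive definite on a region $\norm{z_1-z_1^*}\ge r(\abs{u_1})$. The paper's region $\abs{\cos\alpha_1}\ge k_u/\mu_1$ is not of that form, since it excludes every point with $\alpha_1=-\tfrac{\pi}{2}$ whatever $\rho$ is, and on it only semidefiniteness holds.

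You identify exactly this missing strictness and repair it in two ways:
\begin{itemize}
\item Route~1 uses the Jacobian (the same matrix the paper only brings in for Prop.~\ref{prop:3}) together with a converse quadratic Lyapunov function.
\item Route~2 uses the cross term $-\varepsilon\cos\alpha_1 f(\rho)$. Its derivative contains $\varepsilon v_1\sin\alpha_1 f(\rho)^2\le-\tfrac{\varepsilon v_\epsilon}{2}f(\rho)^2$ wherever $\sin\alpha_1\le-\tfrac12$. Keeping $v_1$ out of $W$ is the right choice, since $v_1$ is not assumed differentiable.
\end{itemize}
Your route costs two things: the hypothesis $f'(\rho_0)>0$, which the paper also tacitly needs in Prop.~\ref{prop:3}, and a purely local conclusion. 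Local is all that can hold, though. The point $(\rho_0,\tfrac{\pi}{2})$ is also an equilibrium of the unforced loop, so no global ISS statement is possible.

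Several details need tightening.
\begin{itemize}
\item The remark that $\dot V_1\le-\tfrac{\mu_1}{2}\cos^2\alpha_1+\tfrac{1}{2\mu_1}u_1^2$ ``already gives boundedness'' is wrong. At $\cos\alpha_1=-u_1/(2\mu_1)$ one has $\dot V_1=u_1^2/(4\mu_1)>0$, so the inequality only yields $V_1(t)\le V_1(0)+\tfrac{1}{2\mu_1}\int_0^t u_1^2$. Nothing later depends on this remark, so drop it.
\item In Route~1, uniform exponential stability of the time-varying linearization does not follow from the frozen-time matrices being Hurwitz. Justify it with $V=f'(\rho_0)e^2+\beta^2$, for which $\dot V=-2\mu_1\beta^2$, plus uniform observability of $\beta$. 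That observability holds because $\dot\beta=-\mu_1\beta+v_1f'(\rho_0)e$ with $v_\epsilon\le v_1\le k_v$.
\item Also in Route~1, Khalil's Lemma~4.6 is a global result and needs global Lipschitzness, which fails here. Conclude local ISS by applying Theorem~4.19 to the converse quadratic function directly.
\item In Route~2, check that $W$ is positive definite near $z_1^*$; this requires $\varepsilon^2 f'(\rho_0)<1$.
\item Also in Route~2, the term $\varepsilon\mu_1\abs{\cos\alpha_1 f(\rho)}$ must be split by Young's inequality between $\mu_1\cos^2\alpha_1$ and the new $f(\rho)^2$ term. It cannot be absorbed by $\mu_1\cos^2\alpha_1$ alone.
\item Your closing claim for all initial conditions in $\{V_1<2\}$ is too strong. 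The saddle $(\rho_0,\tfrac{\pi}{2})$ lies on the boundary of that set. Near it, boundary points with $\dot V_1>0$ exist for every $u_1\neq 0$, so no single $k_u$ works for the whole set. State the estimate on a small sublevel set of $W$, with $k_u$ small.
\end{itemize}
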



\begin{proof}\label{Proof: 3.2}
Under the modified speed control \eqref{eq:v2_robust}, the derivative of Lyapunov candidate ($V_1$) constructed in the proof for Prop. \ref{prop:1} 
is $\dot{V}_1 = -\mu_1 \cos^2\alpha_1 -u_1\cos\alpha_1$.
As in the proof for Prop. \ref{prop:1}, the zero-input system (i.e., $u_1 \equiv 0$) is asymptotically stable. Given $|u_1| \leq k_u$ (Assumption~\ref{assumption:bounds}), for $|\cos\alpha_1| \geq \frac{k_u}{\mu_1} \geq \frac{\abs{u_1}}{\mu_1}$, $\dot{V}_1$ will be negative semi-definite, with the largest invariant set the equilibrium point $z_1^* = (\rho_0, -\frac{\pi}{2})$. The 
ISS property follows then from \cite[Theorem 4.19]{khalil2002nonlinear}. 
\end{proof}

\smallskip
 Proposition~\ref{prop:2} shows that even without knowledge of the leader's steering, the follower will still maintain an approximate abreast formation with the leader, with some small error in the formation shape. However, the ISS property of the system ensures that the error will asymptotically decay to the equilibrium $(\rho_0, -\frac{\pi}{2})$ once the leader stops its turn and resumes a straight path ($u_1(t) \to 0$). 

Even if the leader steering $u_1(t)$ never goes to zero, we can still characterize the resulting relative geometry of the formation. In a special case, we show further that if the turning rate of the leader is a periodic function, the resulting shape will also converge to a periodic solution with the same period. 

\begin{proposition} \label{prop:3}
 Let the steering of the leader $u_1(t)$ be a continuous, $T$-periodic function, i.e., $u_1(t+T) = u_1(t)$ for all $t \ge 0$. Then, there exists a $k_u^* > 0$, 
 such that if $k_u < k_u^*$, then under the control law established in Prop.~\ref{prop:2}, the resulting shape dynamics of $(\rho(t), \alpha_1(t))$ possess a unique, locally exponentially stable $T$-periodic orbit in a neighborhood of the equilibrium $(\rho_0, -\frac{\pi}{2})$.
\end{proposition}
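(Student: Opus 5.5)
The plan is to treat the leader's steering as a small periodic forcing of an exponentially stable autonomous system, and then obtain the periodic orbit from the implicit function theorem applied to the period (Poincaré) map. Write $e_\rho=\rho-\rho_0$ and $e_\alpha=\alpha_1+\tfrac{\pi}{2}$. Under \eqref{eq:v2_robust}, the reduced dynamics \eqref{eq:shape_dynamics_reduced} become
\begin{align*}
\dot e_\rho = -v_1 \sin e_\alpha, \qquad \dot e_\alpha = -\mu_1 \sin e_\alpha + v_1 f(\rho_0+e_\rho) - u_1(t).
\end{align*}
Here I used $\cos\alpha_1=\sin e_\alpha$. For the statement to make sense, $v_1$ must be compatible with the period, so I will assume $v_1$ is constant, or more generally $T$-periodic as well; Assumption~\ref{assumption:bounds} alone does not give this. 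I then write $u_1(t)=k_u w(t)$ with $w$ continuous, $T$-periodic and $\|w\|_\infty\le 1$, so that $k_u$ plays the role of a small parameter $\varepsilon$.

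\textbf{Step 1: exponential stability of the unforced system.} Linearize at the origin with $u_1\equiv 0$. This gives $\dot x=Ax$ with
\begin{align*}
A=\begin{bmatrix}0 & -v_1\\ v_1 f'(\rho_0) & -\mu_1\end{bmatrix},
\end{align*}
so $\operatorname{tr}A=-\mu_1<0$ and $\det A=v_1^2 f'(\rho_0)=v_1^2 g''(\rho_0)$. I expect this to be the main obstacle: Prop.~\ref{prop:1} only gives asymptotic stability via LaSalle, which is not enough for an implicit-function argument. To get hyperbolicity I need the nondegeneracy condition $g''(\rho_0)>0$. It is natural, since $g$ has a strict minimum at $\rho_0$, but it is not implied by the listed hypotheses, so I would add it explicitly. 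With it, $A$ is Hurwitz for constant $v_1\ge v_\epsilon>0$.

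If $v_1$ is $T$-periodic instead, I would first try a strict quadratic Lyapunov function for the linearization, $W=\tfrac12 f'(\rho_0)e_\rho^2+\tfrac12 e_\alpha^2+\delta e_\rho e_\alpha$ with small $\delta>0$. Using $v_\epsilon\le v_1\le k_v$ and a bound on $\dot v_1$, this should show that the monodromy matrix has spectral radius below one.

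\textbf{Step 2: the period map.} Let $\phi(t;x_0,\varepsilon)$ denote the solution, and define $P(x_0,\varepsilon)=\phi(T;x_0,\varepsilon)$. This map is $C^1$ near $(0,0)$ because $f\in C^1$ and $u_1$ is continuous. At $\varepsilon=0$ the origin is a fixed point, and $D_xP(0,0)=e^{AT}$ (or the monodromy matrix in the periodic case), whose spectrum lies strictly inside the unit disk. Hence $I-D_xP(0,0)$ is invertible.

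\textbf{Step 3: existence, uniqueness and stability of the orbit.} The implicit function theorem applied to $P(x,\varepsilon)-x=0$ gives $\varepsilon_0>0$, a neighborhood $U$ of the origin, and a unique $C^1$ branch $x^*(\varepsilon)$ of fixed points in $U$ for $|\varepsilon|<\varepsilon_0$, with $x^*(\varepsilon)=O(\varepsilon)$. Each fixed point corresponds to a $T$-periodic solution near $(\rho_0,-\tfrac{\pi}{2})$, and uniqueness within $U$ gives uniqueness of the orbit. By continuity of eigenvalues, $D_xP(x^*(\varepsilon),\varepsilon)$ still has spectral radius below one for $\varepsilon$ small, so the fixed point is a local contraction of $P$. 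Continuous dependence on initial conditions over one period then upgrades this to local exponential stability of the periodic orbit. Setting $k_u^*=\varepsilon_0$ completes the argument. This is also consistent with Prop.~\ref{prop:2}: ISS confines trajectories to an $O(k_u)$ neighborhood of the equilibrium, which lies inside $U$.
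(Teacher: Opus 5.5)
Your proposal is correct and is essentially the paper's argument. The paper derives the same Hurwitz Jacobian $A$ from $f'(\rho_0)=g''(\rho_0)>0$ and cites Khalil's Theorem 10.3 on $T$-periodic perturbations of an exponentially stable equilibrium, which your implicit-function argument on the period map proves directly. The hypotheses you state explicitly are ones the paper uses tacitly, so stating them is an improvement: nondegeneracy $g''(\rho_0)>0$ (a strict minimum alone only gives $g''(\rho_0)\ge 0$), and $v_1$ constant or $T$-periodic.

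The only slip is in your aside on periodic $v_1$: the cross term in $W$ needs the opposite sign, since $+\delta e_\rho e_\alpha$ with $\delta>0$ contributes $+\delta v_1 f'(\rho_0)e_\rho^2$ to $\dot W$.
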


\begin{proof}\label{Proof: 3.1}
Under the speed control in Prop. \ref{prop:2} 
we can rewrite the $z_1$ subsystem as $\dot{z}_1 = h_1(z_1) + \epsilon h_2(t, z_1, \epsilon)$, i.e. a nominal unperturbed system given by
    $h_1(z_1) = \begin{bmatrix}
        -v_1\cos \alpha_1 \\
        -\mu_1\cos\alpha_1 + v_1f(\rho)
    \end{bmatrix}$, and the perturbation given by 
    $h_2(t, z_1, \epsilon) = \begin{bmatrix}
        0 \\
        -\frac{u_1 (t)}{\epsilon}
    \end{bmatrix}$, where $\epsilon = k_u$. 
Linearizing the unperturbed system ($u_1 \equiv 0$) around the equilibrium point $z_1^* = (\rho_0, -\frac{\pi}{2})$ gives the associated Jacobian matrix
$A= \big[\begin{smallmatrix}
 0 && -v_1 \\
        v_1f'(\rho_0) && -\mu_1
\end{smallmatrix}\big]$.
We see that $A$ is Hurwitz under our assumptions for $f(\rho)$; specifically since $\rho_0$ is a (local) minimum of the $g(\cdot)$ function, we have $g''(\rho_0) = f'(\rho_0) > 0$.
This implies that the equilibrium point $z_1^*$ is locally exponentially stable.

Then, by \cite[Theorem 10.3]{khalil2002nonlinear}, given a perturbation $u_1(t)$ that is continuous and $T$-periodic, 
there exists a $k_u^* >0$, such that $\forall\, k_u < k_u^*$, there exists a unique solution $\tilde{z}_1(t)$ that is also $T$-periodic. 
Moreover, $\tilde{z}_1(t)$ is exponentially stable.
\end{proof}

Proposition~\ref{prop:3} essentially limits the steering bound of the leader $k_u$ from being too large, upper bounded by $k_u^*$. The upper bound $k_u^*$ can be increased by increasing the gain $\mu_1$, 
thereby enlarging the robustness margin of the Jacobian $A$.
\begin{figure*}[t]
	\centering
    \includegraphics[width=1\textwidth]{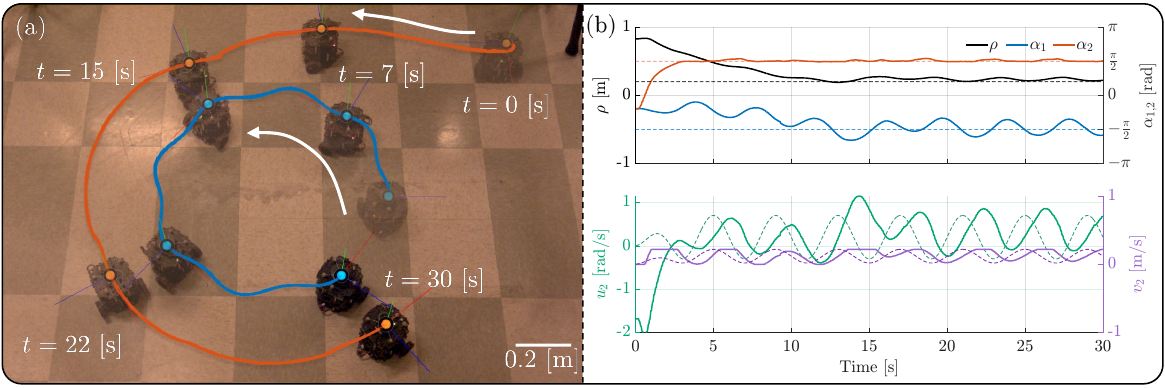}
    \caption{Formation control of two TurtleBot 3 Burger robots, implementing Prop. \ref{prop:2}. (a) Trajectories of the leader (blue) and follower (orange) robot from $t=0$ [s] to $t=30$ [s]. Increasing opacity of the robots indicates the progression of time, with white arrows denoting the general directions of motion. (b) Top subplot displays the evolution of $\rho$, $\alpha_1$, and $\alpha_2$. The equilibrium $z^*$ is represented by the dashed lines. Bottom subplot displays follower control inputs $u_2$ and $v_2$ (solid lines), compared to their desired values (dashed lines).}
	\label{fig:experiment}
\end{figure*}
\section{Results}\label{sec:results}
To validate the theoretical analysis, we simulate the system dynamics numerically and deploy robotic hardware. 

\subsection{Numerical Simulations} \label{sec:numerics}


The shape dynamics of the system~\eqref{eq:shape_dynamics}, along with feedback control laws described in Sec.~\ref{sec:control} are numerically integrated in MATLAB using a Runge-Kutta algorithm of 4-5th order (\textit{ode45}), with relative and absolute error tolerance set to $10^{-5}$ and $10^{-6}$, respectively. For the control laws \eqref{eq:u2_CB}, \eqref{eq:v2_ideal}, \eqref{eq:v2_robust}: the $f(\cdot)$ function is chosen as $f(\rho) = \mu_\rho \left(\frac{\rho^2 - \rho_0^2}{\rho^2}\right), \mu_\rho >0$,
and the gains are chosen as $\mu_1 = 1, \; \mu_2 = 2, \; \mu_\rho = 2$. Once the shape variables are obtained by integration, agent trajectories are reconstructed from relative geometry. 

 The leader initializes at the origin $\mathbf{r}_1(0) = [0, 0]^\transpose$ [m] with a initial heading of $\pi/4$ [rad], maintaining a constant linear speed of $v_1 (t) \equiv 0.5$ [m/s]. To test the robustness of the formation, the leader executes a time-varying steering $u_1(t) = 0.5 + \sin(\pi t)$ [rad/s]. The follower starts at $\mathbf{r}_2(0) = [0, 1]^\transpose$ [m] with an initial heading of $\pi$ [rad]. The objective for the follower is to achieve a Bertrand mate formation, positioning itself to the right of the leader with a desired separation distance of $\rho_0 = 0.5$ [m].

The simulation results are shown in Fig.~\ref{fig:circular}(a),
where the follower operates under the ideal control law (Prop.~\ref{prop:1}), utilizing full knowledge of the leader's steering. In this case, the shape dynamics converge smoothly to the equilibrium point $z^*$. The trajectory (i) and the evolution of the relative states and controls (ii, iii) demonstrate the asymptotic stability of the formation. In addition, the results show faster convergence of $\alpha_2$, compared to $\alpha_1$.

Conversely, Fig. \ref{fig:circular}(b) demonstrates the system's behavior when the follower employs the leader-independent control law (Prop.~\ref{prop:2}), lacking knowledge of the leader's steering. Despite this limited information, the follower still successfully tracks the leader. The formation error remains bounded, confirming the ISS property of the proposed controller under unmodeled leader maneuvers. The figure also shows the shape variables $\rho(t)$ and $\alpha_1(t)$ converging to a periodic orbit with a period of 2 [s], the same period as the leader's steering $u_1(t)$, demonstrating the results of Prop. \ref{prop:3}.

\subsection{Robotic Implementation}
To further demonstrate the control strategies, robotic experiments are performed, which consist of two TurtleBot 3 Burger mobile robots \cite{turtlebot3_manual}, with a maximum linear speed $v_{max}=0.22$ [m/s] and a maximum angular speed $u_{max}=2.84$ [rad/s]. Global pose information for both robots is provided by OptiTrack \cite{optitrack_website} using four motion capture cameras and a color video camera. The communication architecture is built on ROS 2, which receives the real-time pose information and sends control commands to the robots from a centralized Linux workstation. {We note here that local computation of the feedback control can be implemented. We however do not employ such a scheme here for simplicity.}

The experiments are conducted within a testbed of approximately $2.3 \times 2.0$ meters. The leader and the follower are initialized at $\mathbf{r}_1(0)=[0.287,0.005]^\transpose$ [m] and $\mathbf{r}_2(0)=[0.767,0.685]^\transpose$ [m] with an initial heading of $1.553$ [rad] and $-1.574$ [rad], respectively. The leader is assigned a constant linear speed $v_1 (t) \equiv 0.08$ [m/s] with a time-varying angular speed  $u_1(t)=0.2+0.5\sin(0.5\pi t)$ [rad/s]. 
The follower employs the leader-agnostic feedback control (Prop.~\ref{prop:2}), with desired $\rho_0 = 0.2$ [m] and gains $\mu_1 = 1.5, \mu_2 = 2$, and $\mu_\rho = 20$.

The experimental results are provided in Fig.~\ref{fig:experiment}. 
Even with the lack of information about the leader's steering $u_1$, the follower still successfully tracks the leader, converging to a periodic orbit near the desired abreast formation. All code used for numerical simulation and robotic implementation, including a full video demonstration of all experiments, can be found at \cite{Github2026Speed}.


\begin{figure*}[t]
	\centering
	\includegraphics[width=1\textwidth]{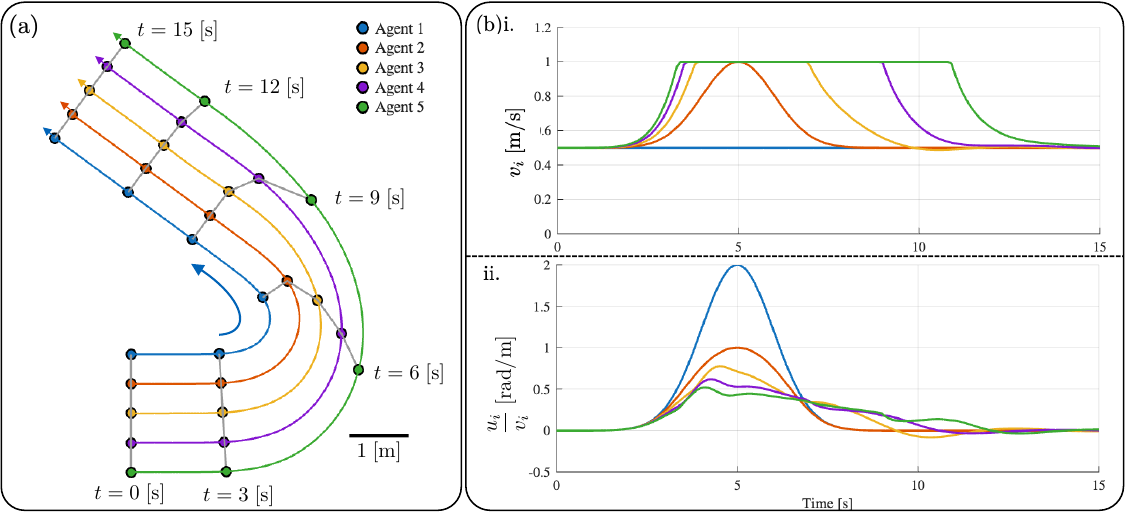}
	\caption{Formation control of $N$ agents ($N = 5$), connected in a chain network as described in Sec.~\ref{sec:nagents}. 
    The trajectories of the $N$ agents at different times are shown in (a), with a line in gray connecting each agent to highlight the whip-like motion taken by the formation of the agents. The speed and steering controls of each agent are plotted in (b)(i)-(ii) (with the same color coding as in (a)).}
	\label{fig:Nagents}
\end{figure*}

\section{Extension to $N$-agent Case} \label{sec:nagents}
The two-agent framework can be naturally extended to a multi-agent system. In this section, a simple extension to an $N$-agent system is provided. Here, agent 1 is considered the `global leader', and each agent $i+1$ treats the preceding agent $i$ as its leader, $i = 1, ..., N-1$, forming a chain network. 

 The dynamics of the network ($N = 5$) are numerically simulated in MATLAB using the same numerical integrator as in Sec.~\ref{sec:numerics} for each $(i, i+1)$ agent pair ($i=1, ..., 4$). The agents are initialized at positions $\mathbf{r}_i(0) = [0, -0.5i]^\transpose$ [m] with uniform headings of 0 [rad]. 
 Agent 1, the global leader, moves at a constant speed $v_1 (t) \equiv 0.5$ [m/s]. To evaluate the transient response of the chain, Agent~1 operates with an initial turning rate of $u_1(0) = 0$ [rad/s] but receives a Gaussian steering impulse with amplitude 1 [rad/s] and standard deviation 1 [s], centered at $5$ [s]. Each of the remaining agents follows the speed and steering control laws as in Prop. \ref{prop:1}, and is limited to a maximum speed of $v_{i, \text{max}} = 1$ [m/s]. 

The simulation results are depicted in Fig.~\ref{fig:Nagents}. The steering impulse propagates down the chain, with each subsequent follower adapting its trajectory to maintain the Bertrand mate formation relative to its immediate predecessor. As agents on the `outside' need to move faster to keep in formation, they reach the maximum speed, resulting in a \textit{whip or wave-like motion} of the formation of the agents (see Fig.~\ref{fig:Nagents}(a)). The curvatures of the agents' trajectories, shown in Fig. \ref{fig:Nagents}(b), illustrate how the perturbation is propagated through the network, with each agent returning to equilibrium after some delay. This emergent behavior of the multi-agent system is prominently seen in natural flocking as well~\cite{attanasi2014information}, with recent research modeling the propagation of these wave-like structures in biological swarms \cite{OKeeffe2017, halder2019optimality}.
\vspace{-4pt}
\section{Conclusion} \label{sec:conclusion}
\vspace{-2pt}
In this paper, we highlight the role of speed regulation in maintaining formation within a two-agent system. Using the unicycle model for the agents, we develop a cascade feedback control strategy that couples a novel speed control law with a constant bearing steering law. Stability analysis demonstrates that this approach leads to asymptotic convergence of the follower to a targeted abreast formation when the leader's steering is known. When the leader's steering inputs are unknown, we show that the speed control is still robust by proving input-to-state stability. Further, we prove that even periodic inputs still lead to convergence to a periodic orbit in the follower's trajectory. These theoretical guarantees are successfully validated by both numerical simulations and hardware experiments using mobile robots. Additionally, by extending the leader-follower paradigm to an $N$-agent chain network, we provide a mechanism for modeling movement in a larger collective. 

Future work will explore the relaxation of the orthogonal bearing constraint to accommodate arbitrary desired relative headings (i.e., other than $\pm \frac{\pi}{2}$), enabling a generalized full constant bearing control framework. We will also analyze possible control algorithms which allow online switching between the left and right following convention for the follower while preserving stability. Additionally, we will consider multi-agent formations beyond simple chain networks. Specifically, we plan to investigate distributed control across complex network topologies where an agent determines its trajectory by following a virtual leader synthesized from the spatial average of its $k$-nearest neighbors.

\vspace*{-0pt}
\bibliographystyle{IEEEtran}
\bibliography{bibfiles/reference, bibfiles/octopus_papers, bibfiles/collectives_motivation}

@article{couzin2002collective,
  title={Collective memory and spatial sorting in animal groups},
  author={Couzin, Iain D and Krause, Jens},
  journal={Journal of Theoretical Biology},
  volume={218},
  number={1},
  pages={1--11},
  year={2002}
}

@article{katz2011inferring,
  title={Inferring the structure and dynamics of interactions in schooling fish},
  author={Katz, Yael and Tunstr{\o}m, Kolbj{\o}rn and Ioannou, Christos C and Huepe, Cristi{\'a}n and Couzin, Iain D},
  journal={Proceedings of the National Academy of Sciences},
  volume={108},
  number={46},
  pages={18720--18725},
  year={2011},
  publisher={National Academy of Sciences}
}

@article{olfati2006flocking,
  title={Flocking for multi-agent dynamic systems: Algorithms and theory},
  author={Olfati-Saber, Reza},
  journal={IEEE Transactions on Automatic Control},
  volume={51},
  number={3},
  pages={401--420},
  year={2006},
  publisher={IEEE}
}

@article{paley2007oscillator,
  title={Oscillator models and collective motion},
  author={Paley, Derek A and Leonard, Naomi Ehrich and Sepulchre, Rodolphe and Grunbaum, Daniel and Parrish, Julia K},
  journal={IEEE Control Systems Magazine},
  volume={27},
  number={4},
  pages={89--105},
  year={2007},
  publisher={IEEE}
}

@article{sepulchre2007stabilization,
  title={Stabilization of planar collective motion: All-to-all communication},
  author={Sepulchre, Rodolphe and Paley, Derek A and Leonard, Naomi Ehrich},
  journal={IEEE Transactions on Automatic Control},
  volume={52},
  number={5},
  pages={811--824},
  year={2007},
  publisher={IEEE}
}

@article{zago2018speed,
  title={The speed-curvature power law of movements: a reappraisal},
  author={Zago, Myrka and Matic, Adam and Flash, Tamar and Gomez-Marin, Alex and Lacquaniti, Francesco},
  journal={Experimental Brain Research},
  volume={236},
  number={1},
  pages={69--82},
  year={2018},
  publisher={Springer}
}

@book{lavalle2006planning,
  title={Planning algorithms},
  author={LaValle, Steven M},
  year={2006},
  publisher={Cambridge University Press}
}

@IEEEtranBSTCTL{BSTcontrol,
CTLuse_forced_etal       = "yes",
CTLmax_names_forced_etal = "2",
CTLnames_show_etal       = "1" }

@book{khalil2002nonlinear,
  title={Nonlinear Systems},
  author={Khalil, Hassan K},
  publisher={Prentice Hall},
  year={2002}
}

@inproceedings{halder2016steering,
	title={Steering for beacon pursuit under limited sensing},
	author={Halder, Udit and Schlotfeldt, Brent and Krishnaprasad, PS},
	booktitle={Proceedings of the 55th IEEE Conference on Decision and Control (CDC)},
	pages={3848--3855},
	year={2016},
	organization={IEEE}
}

@article{galloway2013symmetry,
  title={Symmetry and reduction in collectives: cyclic pursuit strategies},
  author={Galloway, Kevin S and Justh, Eric W and Krishnaprasad, PS},
  journal={Proceedings of the Royal Society A: Mathematical, Physical and Engineering Sciences},
  volume={469},
  number={2158},
  pages={20130264},
  year={2013},
  publisher={The Royal Society Publishing}
}

@article{justh2006steering,
	title={Steering laws for motion camouflage},
	author={Justh, Eric W and Krishnaprasad, PS},
	journal={Proceedings of the Royal Society A: Mathematical, Physical and Engineering Sciences},
	volume={462},
	number={2076},
	pages={3629--3643},
	year={2006},
	publisher={The Royal Society London}
}

@phdthesis{halder2019optimality,
  title={Optimality, Synthesis and a Continuum Model for Collective Motion},
  author={Halder, Udit},
  school={University of {M}aryland, {C}ollege {P}ark},
  year={2019}
}

@article{flash1985coordination,
  title={The coordination of arm movements: an experimentally confirmed mathematical model},
  author={Flash, Tamar and Hogan, Neville},
  journal={Journal of Neuroscience},
  volume={5},
  number={7},
  pages={1688--1703},
  year={1985},
  publisher={Soc Neuroscience}
}

@article{attanasi2014information,
  title={Information transfer and behavioural inertia in starling flocks},
  author={Attanasi, Alessandro and Cavagna, Andrea and Del Castello, Lorenzo and Giardina, Irene and Grigera, Tomas S and Jeli{\'c}, Asja and Melillo, Stefania and Parisi, Leonardo and Pohl, Oliver and Shen, Edward and others},
  journal={{Nature Physics}},
  volume={10},
  number={9},
  pages={691--696},
  year={2014},
  publisher={Nature Publishing Group}
}

@article{mischiati2015internal,
  title={Internal models direct dragonfly interception steering},
  author={Mischiati, Matteo and Lin, Huai-Ti and Herold, Paul and Imler, Elliot and Olberg, Robert and Leonardo, Anthony},
  journal={Nature},
  volume={517},
  number={7534},
  pages={333--338},
  year={2015},
  publisher={Nature Publishing Group}
}

@article{Bishop1975,
  author  = {Bishop, Richard L.},
  title   = {There is more than one way to frame a curve},
  journal = {The American Mathematical Monthly},
  volume  = {82},
  number  = {3},
  pages   = {246--251},
  year    = {1975},
  publisher = {Taylor \& Francis}
}

@inproceedings{Zhang2004,
  author    = {Zhang, Fumin and Justh, Eric and Krishnaprasad, P. S.},
  title     = {Boundary following using gyroscopic control},
  booktitle = {Proceedings of the 43rd IEEE Conference on Decision and Control (CDC)},
  volume    = {5},
  pages     = {5204--5209},
  year      = {2004},
  month     = {Dec},
  doi       = {10.1109/CDC.2004.1429634}
}

@article{Murphy2019,
  author  = {Murphy, David W. and Olsen, Daniel and Kanagawa, Marleen and King, Rob and Kawaguchi, So and Osborn, Jon and Webster, Donald R. and Yen, Jeannette},
  title   = {The Three Dimensional Spatial Structure of Antarctic Krill Schools in the Laboratory},
  journal = {Scientific Reports},
  year    = {2019},
  volume  = {9},
  doi     = {10.1038/s41598-018-37379-9}
}

@article{brighton2017,
  title={Terminal attack trajectories of peregrine falcons are described by the proportional navigation guidance law of missiles},
  author={Brighton, Caroline H. and Thomas, Adrian L. R. and Taylor, Graham K.},
  journal={Proceedings of the National Academy of Sciences},
  volume={114},
  number={51},
  year={2017},
  pages={13495--13500}
}

@manual{turtlebot3_manual, 
  author = {{ROBOTIS}},
  title = {{TurtleBot3}},
  organization = {ROBOTIS Co., Ltd.},
  year = {2026},
  note = {Available online: \url{https://emanual.robotis.com/docs/en/platform/turtlebot3/overview/}} }

@misc{optitrack_website,
  author       = {{NaturalPoint, Inc.}},
  title        = {{OptiTrack} Motion Capture Systems and {Motive} Software},
  howpublished = {\url{https://optitrack.com/}},
  year         = {2026},
}

@book{Davis1962,
  author    = {Davis, Harold T.},
  title     = {Introduction to Nonlinear Differential and Integral Equations},
  publisher = {Dover Publications},
  year      = {1962},
  address   = {New York}
}

@article{Bertrand1850,
  author  = {Bertrand, J.},
  title   = {Mémoire sur la théorie des courbes à double courbure},
  journal = {Journal de Math\'{e}matiques Pures et Appliqu\'{e}es},
  series  = {1},
  volume  = {15},
  pages   = {332--350},
  year    = {1850}
}

@article{OKeeffe2017,
  author  = {O'Keeffe, Kevin P. and Hong, Hyunsuk and Strogatz, Steven H.},
  title   = {Oscillators that sync and swarm},
  journal = {Nature Communications},
  volume  = {8},
  number  = {1},
  pages   = {1504},
  year    = {2017},
  doi     = {10.1038/s41467-017-01190-3}
}

@article{tang2021formation,
  title={Formation control of a leader--follower structure in three dimensional space using bearing measurements},
  author={ Tang, Z. and Cunha, R. and Hamel, T. and Silvestre, C.},
  journal={Automatica},
  volume={128},
  pages={109567},
  year={2021},
  publisher={Elsevier}
}

@article{trinh2021finite,
  title={Finite-Time Bearing-Based Maneuver of Acyclic Leader-Follower Formations},
  author={Trinh, M. H. and Ahn, H.-S.},
  journal={IEEE Control Systems Letters},
  volume={6},
  pages={1004--1009},
  year={2021},
  publisher={IEEE}
}

@article{consolini2008leader,
  title={Leader--follower formation control of nonholonomic mobile robots with input constraints},
  author={Consolini, L. and Morbidi, F. and Prattichizzo, D. and Tosques, M.},
  journal={Automatica},
  volume={44},
  number={5},
  pages={1343--1349},
  year={2008},
  publisher={Elsevier}
}

@article{do2007formation,
  title={Formation tracking control of unicycle-type mobile robots with limited sensing ranges},
  author={{Do, K.-D.}},
  journal={IEEE Transactions on Control Systems Technology},
  volume={15},
  number={3},
  pages={526--538},
  year={2007},
  publisher={IEEE}
}

@misc{Github2026Speed,
  author       = {Li, Xincheng and Liu, Tengyue and Halder, Udit},
  title        = {GitHub repository},
  year         = {2026},
  publisher    = {GitHub},
  howpublished = {\url{https://github.com/HalderLab/cdc2026-Speed-Control}},
}

\appendices
\renewcommand{\thelemma}{A-\arabic{section}.\arabic{lemma}}
\renewcommand{\thetheorem}{A-\arabic{section}.\arabic{theorem}}
\renewcommand{\theequation}{A-\arabic{equation}}
\renewcommand{\thedefinition}{A-\arabic{definition}}
\setcounter{lemma}{0}
\setcounter{theorem}{0}
\setcounter{equation}{0}

\section{Leader Parameter Estimation and Stability Analysis} \label{appdx:estimation_analysis}
Let $\hat{\dot{\rho}}$ and $\hat{\dot{\alpha}}_2$ denote the estimates for $\dot\rho$ and $\dot\alpha_2$, respectively, which we assume possess bounded errors from sensor or processing noise. In other words, there exists $M_\rho > 0$, $ M_{\alpha_2} > 0$ such that
\begin{equation}
    |\dot{\rho} - \hat{\dot{\rho}}| \le M_\rho, \quad |\dot{\alpha}_2 - \hat{\dot{\alpha}}_2| \le M_{\alpha_2}
\end{equation}

Using our rate estimates, we construct the corresponding Cartesian estimates $\hat{X}$ and $\hat{Y}$
\begin{equation} 
    \hat{X} = - \hat{\dot\rho} - v_2 \cos\alpha_2, \quad \hat{Y} = \rho(\hat{\dot\alpha}_2 + u_2) - v_2\sin\alpha_2 
\end{equation}
Here $v_2$ and $u_2$ are the measured physical states of the follower, not the control laws which we design. The estimation errors for these terms are then strictly bounded by
\begin{equation}
    |X - \hat{X}| \le M_\rho, \quad |Y - \hat{Y}| \le \rho M_{\alpha_2}
\end{equation}

We now derive estimates for the leader's speed $\hat{v}_1$ and bearing $\hat{\alpha}_1$ from the Cartesian estimates:
\begin{equation}
    \hat{v}_1 = \sqrt{\hat{X}^2 + \hat{Y}^2}, \quad \hat{\alpha}_1 = \arctan\left(\frac{\hat{Y}}{\hat{X}}\right)
\end{equation}
Noting that $v_1 = \sqrt{X^2 + Y^2}$ and $\alpha_1 = \arctan\left(\frac{Y}{X}\right)$ can be thought of as polar coordinates of the Cartesian terms, we can apply a first-order Taylor series expansion to map the Cartesian errors to these polar coordinates. After some standard calculations, the estimation errors $\tilde{v}_1 = v_1 - \hat{v}_1$ and $\tilde{\alpha}_1 = \alpha_1 - \hat{\alpha}_1$ can be shown to be bounded by:
\begin{align}
\begin{split}
    |\tilde{v}_1| &\le \sqrt{M_\rho^2 + \rho^2 M_{\alpha_2}^2} =: M_{v_1}(\rho) \\
|\tilde{\alpha}_1| &\le \frac{1}{v_\epsilon} \sqrt{M_\rho^2 + \rho^2 M_{\alpha_2}^2} = \frac{1}{v_\epsilon} M_{v_1}(\rho)
\end{split}
\end{align}

With these estimates, the dynamic of the system are now given by 
\begin{align} \label{eq: estimated_dynamics}
    \begin{split}
        \dot{\rho} &= -v_1 \cos \alpha_1 - \hat{v}_2 \cos \alpha_2 \\
        \dot{\alpha}_1 &= - u_1 + \frac{1}{\rho} (v_1 \sin \alpha_1 + \hat{v}_2 \sin \alpha_2)  \\
        \dot{\alpha}_2 &= - \hat{u}_2 + \frac{1}{\rho} (v_1 \sin \alpha_1 + \hat{v}_2 \sin \alpha_2) 
    \end{split}
\end{align}
where $\hat{v}_2$ and $\hat{u}_2$ are the feedback controls constructed using the $\hat{v}_1$ and $\hat{\alpha}_1$ estimations.

Using these estimates, we construct the steering control law $\hat{u}_2$. 
\begin{equation}
    \hat{u}_2 = -\mu_2 \cos\alpha_2 + \frac{1}{\rho}(\hat{v}_1\sin\hat\alpha_1 + \hat{v}_2 \sin\alpha_2)
\end{equation}
When $\hat{u}_2$ is substituted into the $\alpha_2$ dynamics \eqref{eq: estimated_dynamics}, the resulting closed-loop system takes the form of a nominal system subject to a perturbation:
\begin{equation}
    \dot{\alpha}_2 = \mu_2 \cos \alpha_2 + (\dot{\alpha}_2 - \hat{\dot{\alpha}}_2)
\end{equation}
Because the nominal system $\dot{\alpha}_2 = \mu_2 \cos \alpha_2$ is exponentially stable, it is Input-to-State Stable (ISS) with respect to the error $\dot{\alpha}_2 - \hat{\dot{\alpha}}_2$, which is bounded by $M_{\alpha_2}$. The perturbation is strictly bounded by $M_{\alpha_2}$. Consequently, $\alpha_2$ will still exponentially converge to an $\epsilon$-ball around the desired equilibrium $\alpha_2 = \pi/2$, where the ultimate bound $\epsilon$ is proportional to $M_{\alpha_2} / \mu_2$.

To analyze the stability of the subsystem $z_1 = (\rho, \alpha_1)$, we construct our speed control law $\hat{v}_2$ using these estimated states ($\hat{v}_1$ and $\hat{\alpha}_1$):
\begin{equation}
    \hat{v}_2 = -\hat{v}_1 \sin \hat{\alpha}_1 + \rho \left( -\mu_1 \cos\hat{\alpha}_1 + \hat{v}_1 f(\rho) \right) 
\end{equation}
The resulting closed-loop dynamics for the $z_1$ subsystem can be written as:
\begin{equation}
    \dot{z}_1 = h(z_1, v_2) + \tilde{h}
\end{equation}
where $h(z_1, v_2)$ is the nominal, unperturbed system (same as $h_1$ in the proof of Prop. 3.3), and $\tilde{h} = \hat{h} - h$ represents the system error injected by the estimations.

The magnitude of the system error $\tilde{h}$ is bounded by:
\begin{equation}
    \|\tilde{h}\| \le  M_{\alpha_2} + \mu_1 + f(\rho)M_{v_1}(\rho)
\end{equation}
While this bound explicitly depends on $\rho$, we may assume the system operates within a compact set $\rho \in [\rho_{min}, \rho_{max}]$, that is, our initial conditions are such that we operate within a compact, forward-invariant region of attraction around the equilibrium. Because $f(\rho)$ is continuous, the term $f(\rho)M_{v_1}(\rho)$ reaches a finite maximum on this interval. Therefore, the entire perturbation vector is globally upper-bounded by some constant $M_h$ within this set.

Thus, the $z_1$ subsystem is a nominal system subject to a strictly bounded perturbation on the $\alpha_1$ dynamics. As proven in Proposition 3.2, the nominal system is ISS. Therefore we can follow the same arguments to show that the entire system guarantees bounded stability around the equilibrium, completing the stability analysis for the estimation of $v_1$ and $\alpha_1$.

\end{document}